\documentclass[aps,prl,letterpaper,superscriptaddress,twocolumn]{revtex4}

\usepackage{amstext,amsmath,amssymb,ulem,amsthm}
\normalem 
\usepackage{color}

\usepackage{times}
\usepackage{algorithm}
\floatname{algorithm}{Protocol}
\usepackage{ftnxtra}

\usepackage[T1]{fontenc}
\usepackage[utf8]{inputenc}

\usepackage{graphicx}
\usepackage[colorlinks]{hyperref}

\theoremstyle{plain}

\theoremstyle{definition}
\newtheorem{definition}{Definition}

\newcommand{\beq}{\begin{equation}}
\newcommand{\eeq}{\end{equation}}

\newcommand{\ket} [1] {\vert #1 \rangle}
\newcommand{\bra} [1] {\langle #1 \vert}
\newcommand{\braket}[2]{\langle #1 | #2 \rangle}

\newcommand{\ba}{\begin{align}}
\newcommand{\ea}{\end{align}}
\newcommand{\bea}{\begin{eqnarray}}
\newcommand{\eea}{\end{eqnarray}}

\newcommand{\R}{\mathbb{R}}

\newcommand{\norm}[1]{\left\lVert{#1}\right\rVert}


\setlength{\parskip}{1pt}

\usepackage{tikz}	
\usepackage[hang,small,bf]{caption}  
\usetikzlibrary{backgrounds,fit,decorations.pathreplacing}  
\makeatletter

\@ifundefined{textcolor}{}
{%
 \definecolor{BLACK}{gray}{0}
 \definecolor{WHITE}{gray}{1}
 \definecolor{RED}{rgb}{1,0,0}
 \definecolor{GREEN}{rgb}{0,.4,0}
 \definecolor{BLUE}{rgb}{0,0,1}
 \definecolor{CYAN}{cmyk}{1,0,0,0}
 \definecolor{MAGENTA}{cmyk}{0,1,0,0}
 \definecolor{YELLOW}{cmyk}{0,0,1,0}
 }

\makeatother

\usepackage{bm}
\usepackage{mathtools}

\usepackage{graphicx}  
\usepackage{dcolumn}   
\usepackage{bm}        
\usepackage{color}
\usepackage[]{algorithm}
\usepackage{braket}
\usepackage{bbm}
\usepackage{mathtools}
\usepackage{amstext,amsmath,amssymb,ulem,amsthm}
\normalem 
\usepackage{color}

\usepackage{times}
\usepackage{algorithm}
\floatname{algorithm}{Algorithm}
\usepackage{ftnxtra}

\newcommand{\Ord}[1]{\mathcal{O}\left(#1\right)}

\renewcommand{\braket}[2]{\left< #1 \left|#2 \right.\right>}

\hyphenation{ALPGEN}
\hyphenation{EVTGEN}
\hyphenation{PYTHIA}

\newtheorem{thm}{Theorem}
\newtheorem{lem}{Lemma}

\usepackage[T1]{fontenc}
\usepackage[utf8]{inputenc}
\usepackage{graphicx}
\usepackage[colorlinks]{hyperref}

\def\id{I}

\def\1{\mat{\id}}
\def\mat#1{\mathbf{#1}}

\renewcommand{\vec}[1]{\bm{\mathrm{#1}}}



\renewcommand{\sout}[1]{}



\begin{document} 

\title{A quantum linear system algorithm for dense matrices}
                                                       
\author{Leonard Wossnig}
\affiliation{Theoretische Physik, ETH Z\"urich}
\affiliation{Department of Materials, University of Oxford}

\author{Zhikuan Zhao}
\email{zhikuan_zhao@mymail.sutd.edu.sg}
\affiliation{Singapore University of Technology and Design}
\affiliation{Centre for Quantum Technologies, National University of Singapore}

\author{Anupam Prakash}
\affiliation{Centre for Quantum Technologies, National University of Singapore}

\date{\today}
\begin{abstract}
Solving linear systems of equations is a frequently encountered problem in machine learning and optimisation. Given a matrix $A$ and a vector
$\mathbf b$ the task is to find the vector $\mathbf x$ such that $A \mathbf x = \mathbf b$. We describe a quantum algorithm 
that achieves a sparsity-independent runtime scaling of $\Ord{\kappa^2 \norm{A}_F \cdot\text{polylog}(n)/\epsilon}$, where $n\times n$ is the dimensionality of $A$ with Frobenius norm $\norm{A}_F$, $\kappa$ denotes the condition number of $A$, and $\epsilon$ is the desired precision parameter. When applied to a dense matrix with spectral norm bounded by a constant, the runtime of the proposed algorithm is bounded by $\Ord{\kappa^2\sqrt{n} \cdot\text{polylog}(n)/\epsilon}$, which is a quadratic 
improvement over known quantum linear system algorithms.
Our algorithm is built upon a singular value estimation subroutine, which makes use of a memory architecture that allows for efficient preparation of quantum states that correspond to the rows and row Frobenius norms of $A$.
\end{abstract}

\pacs{}
\maketitle
 
\paragraph{Introduction.}
A common bottleneck in statistical learning and machine learning algorithms is the inversion of high-dimensional matrices in order to solve linear systems of equations. Examples include covariance matrix inversions in Gaussian processes and support vector machines, as well as data matrix inversions in large scale regression problems~\cite{Rasmussen2004,Bishop2006}. 

Recent advances in the field of quantum information processing have provided promising prospects for the efficient solution of high-dimensional linear systems. The breakthrough work of Harrow, Hassidim and Lloyd \cite{Harrow2009a} introduced the quantum linear system algorithm (QLSA) that computes the quantum state  $\ket{\mathbf{x}}=\ket{A^{-1}\mathbf{b}}$ corresponding to the solution of a linear system $A \mathbf{x} = \mathbf{b}$, where $A\in \R^{n \times n}$ and $\mathbf x,\mathbf b \in \R^{n}$, in time $\Ord{\text{polylog}(n)}$ for a sparse and well-conditioned $A$. 
Unlike the output $A^{-1} \mathbf b  \in \R^{n}$ of a classical linear system solver, a copy of $\ket{A^{-1} \mathbf b }$ does not provide access to the coordinates of $A^{-1} \mathbf b $. Nevertheless, it allows us to perform useful computations such as sampling from the solution vector. The QLSA algorithm has inspired several works 
\cite{Rebentrost2013,Schuld2016,Wiebe2012,Wiebe2015b,Wiebe2015,Wiebe2014,Wiebe2014a,Wiebe2016,Zeng2016,Zhao2015}
in the emerging research area of quantum machine learning.

In the classical setting, the best known algorithm for the sampling task performed by the QLSA algorithm requires solving the 
linear system. The running time for a classical linear system solver scales as $\Ord{n^{\omega}}$, where the matrix multiplication exponent $\omega \le 2.373$~\cite{Coppersmith1990, L14}. 
However, as the sub-cubic scaling is difficult to achieve in practice, linear system solvers typically use the Cholesky decomposition and require time $\Ord{n^{3}}$ for dense matrices.

The QLSA algorithm \cite{Harrow2009a} has running time $\tilde{O}(\kappa^{2} s(A)^{2}/\epsilon)$ where $\kappa$ is the condition number, 
$s(A)$ is the sparsity or the maximum number of non-zero entries in a row of $A$ and $\epsilon$ is the precision to which the solution is approximated. 
There have been several improvements to the QLSA algorithm since the original proposal that have improved the running time to linear in $\kappa$ 
and $s(A)$ and to poly-logarithmic in the precision parameter $\epsilon$ \cite{Childs2015, Ambainis2010}. The work \cite{Clader2013a} introduced pre-conditioning 
for the QLSA algorithm and extended its applicability.

Quantum machine learning is an emerging research area that attempts to harness the power of quantum information processing to obtain speedups for classical machine learning tasks. A number of quantum machine learning algorithms have been proposed \cite{Rebentrost2013,Schuld2016,Wiebe2012,Wiebe2015b,Wiebe2015,Wiebe2014,Wiebe2014a,Wiebe2016,Zeng2016,Zhao2015}. 
Most of these algorithms use a quantum linear system solver as a subroutine. However, as mentioned in \cite{Harrow2009a}, and later also pointed out in \cite{Childs2009, Aaronson2015b}, the QLSA potentially has a few caveats.
In particular, the QLSA achieves an exponential speedup over classical algorithms when the matrix $A$ is sparse
and well conditioned, due to the sparsity-dependent Hamiltonian simulation subroutine. The potential exponential advantage of QLSA is lost when it is applied to dense matrices, which consistute a large class of interesting applications. 
Examples include kernel methods,
~\cite{Wilson2015}, and artificial neural networks,
where particularly convolutional neural network architectures rely heavily on subroutines that manipulate large, non-sparse matrices~\cite{Yangqing2014,Chetlur2014}.
Alternative approaches to the quantum linear system problem that avoid sparsity dependence are therefore desirable for a more general application of quantum computing to classical learning problems.

In this letter we present a quantum algorithm for solving linear systems of equations using the quantum singular value estimation (QSVE) 
algorithm introduced in \cite{Kerenidis2016}. The proposed algorithm 
achieves a runtime $\Ord{\kappa^2 \norm{A}_F \cdot\text{polylog}(n)/\epsilon}$, where $\kappa$ denotes the condition number of $A$, $\norm{A}_F$ is the Frobenius norm and $\epsilon$ is the precision parameter. When the spectral norm $\norm{A}_*$ is bounded by a constant, the scaling becomes $\Ord{\kappa^2 \sqrt{n} \cdot \text{polylog}(n)/\epsilon}$, which amounts to a polynomial speed-up over the $\Ord{\kappa^2 n \cdot \text{polylog}(n)/\epsilon}$ scaling achieved by~\cite{Harrow2009a} when applied to dense matrices.

We start by introducing some preliminaries. For a symmetric matrix $A \in \R^{n\times n}$ with spectral decomposition $A=\sum_{i \in [n]}\lambda_i\mathbf{s}_i\mathbf{s}_i^{\dagger}$, the singular value decomposition is given by $A=\sum_{i}^{r}|\lambda_i| \mathbf{s}_i\mathbf{s}_i^{\dagger}$.  
We also need the well known quantum phase estimation algorithm:

\begin{thm}[Phase estimation \cite{Kitaev1995}]
  \label{pest}
  Let unitary $U \ket{v_j} = \exp{(i \theta_j)} \ket{v_j}$ with $\theta_j \in [ - \pi ,\pi ]$ for $j \in [n]$. There is a quantum algorithm that transforms $\sum_{j \in [n]} \alpha_j \ket{v_j} \to \sum_{j \in [n]} \alpha_j \ket{v_j} \ket{\overline{\theta}_j}$ such that $|\overline{\theta_{j}} - \theta_{j} |\leq \delta$ for all $j\in [n]$ with probability $1-1/\text{poly}(n)$ in time $\Ord{T_U \log{(n)} / \delta}$, where $T_U$ defines the time to implement $U$.
\end{thm}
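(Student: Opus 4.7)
The plan is to instantiate the standard quantum phase estimation circuit and then boost its success probability via the median trick. First, I would prepare an ancilla register of $t = \lceil \log(1/\delta) \rceil + \Ord{1}$ qubits in a uniform superposition using Hadamard gates, and tensor it with an eigenvector $\ket{v_j}$. Applying the sequence of controlled-$U^{2^k}$ operations for $k = 0, 1, \ldots, t-1$ imprints the eigenphase $\theta_j$ onto the ancilla register, producing $(1/\sqrt{2^t}) \sum_{y} e^{i y \theta_j} \ket{y} \ket{v_j}$. The total cost of these controlled powers is $\Ord{T_U \cdot 2^t} = \Ord{T_U / \delta}$, since $U^{2^k}$ is realized by concatenating $2^k$ black-box calls to $U$ (controlled by the $k$-th ancilla qubit).

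Second, I would apply the inverse quantum Fourier transform to the ancilla register. The standard QPE analysis shows that measuring this register yields an integer $y$ such that $|y/2^t - \theta_j/(2\pi)| \leq 2^{-t}$ with probability at least $8/\pi^2$; rescaling by $2\pi$ turns this into a $\delta$-close estimate $\overline{\theta}_j$ of $\theta_j$. Because every operation so far is linear in the second register, feeding in the superposition $\sum_j \alpha_j \ket{v_j}$ produces, by linearity, a state of the form $\sum_j \alpha_j \ket{v_j} \ket{\overline{\theta}_j} + \ket{\text{garbage}}$, where the garbage term has small weight on each eigenbranch.

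Third, to boost the per-branch success probability from the constant $8/\pi^2$ to $1 - 1/\text{poly}(n)$, I would run the above procedure $r = \Ord{\log n}$ times in parallel on independent ancilla registers, coherently compute the median of the $r$ estimates into a fresh output register, and uncompute the intermediate estimates. A Chernoff bound shows that the median of $r$ independent constant-probability estimates lies within $\delta$ of the true phase with failure probability $2^{-\Omega(r)} = 1/\text{poly}(n)$. This multiplies the runtime by a factor of $\log n$, giving the claimed $\Ord{T_U \log(n)/\delta}$ bound.

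The main obstacle is the coherent error analysis rather than the circuit design. For a single eigenbranch it is straightforward that the measurement outcome is $\delta$-close with high probability, but the theorem asserts that the joint output state is close to $\sum_{j \in [n]} \alpha_j \ket{v_j} \ket{\overline{\theta}_j}$ in trace distance. One has to bound the bad-branch amplitudes uniformly, take a union bound over $j \in [n]$ weighted by $|\alpha_j|^2$, and verify that the reversible median computation and uncomputation leave the output register essentially disentangled from the work ancillas so that these error terms do not accumulate coherently. Carefully propagating these small errors through the median circuit is the subtle part of the argument.
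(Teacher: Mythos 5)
The paper does not actually prove Theorem~\ref{pest}; it imports it as a black-box result with a citation to Kitaev, so there is no internal proof to compare against. Your reconstruction is the standard and correct argument: the textbook QPE circuit (Hadamards, controlled-$U^{2^k}$ at cost $\Ord{T_U/\delta}$, inverse Fourier transform) gives a per-branch $\delta$-accurate estimate with constant probability $8/\pi^2$, and the coherent median-of-$\Ord{\log n}$ amplification drives the per-branch failure amplitude down to $1/\text{poly}(n)$ at a multiplicative cost of $\log n$, which is exactly where the $\log(n)$ in the stated runtime comes from. You also correctly identify the genuinely subtle point, namely that the guarantee is about the joint superposition rather than a single measured branch, so the bad-branch amplitudes must be controlled in $\ell_2$ norm and the median register must be left unentangled from the work ancillas after uncomputation. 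Two small refinements worth noting: since $\theta_j\in[-\pi,\pi]$ covers the full circle, the estimate produced by QPE is really $\theta_j \bmod 2\pi$, so the error bound $|\overline{\theta}_j-\theta_j|\le\delta$ must be read modulo $2\pi$ (this is harmless here because the paper only ever uses $\cos(\overline{\theta}_j/2)$, but it matters for the branch points); and the ``union bound over $j\in[n]$'' is more cleanly handled by noting that the squared $\ell_2$ distance between the actual and ideal output states is $\sum_j|\alpha_j|^2$ times the per-branch error, so it suffices to make the per-branch failure probability $1/\text{poly}(n)$ with a suitably large polynomial rather than to union-bound over all $n$ eigenvectors explicitly.
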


\noindent Quantum singular value estimation can be viewed as an extension of phase estimation to non unitary matrices. It is the main algorithmic primitive required for our linear system solver. 
\begin{definition}[Quantum singular value estimation]
  \label{q_pe}
  Let $A \in \R^{m\times n}$ have singular value decomposition $A = \sum_{i} \sigma_{i} u_{i} v_{i}^{t}$. A quantum singular value estimation algorithm with precision $\delta$ transforms $\sum_{j \in [n]} \alpha_j \ket{v_j} \to \sum_{j \in [n]} \alpha_j \ket{v_j} \ket{\overline{\sigma}_j}$ such that $|\overline{\sigma_{j}} - \sigma_{j}|\leq \delta$ for all $j\in [n]$ with probability $1-1/\text{poly}(n)$. 
\end{definition}
\noindent A quantum singular value estimation (QSVE) algorithm with running time of $\tilde{O}(\norm{A}_{F} /\delta)$ was presented in \cite{Kerenidis2016}, where it was used for quantum recommendation systems. 
An SVE algorithm applied to a symmetric matrix estimates $|\lambda_{i}|$ but does not provide an estimate for $sign(\lambda_{i})$. 
However, in order to solve linear systems we also need to recover $sign(\lambda_{i})$. We provide a simple procedure for recovering the sign given an 
SVE algorithm. Our procedure provides a way to construct a quantum linear system solver from a QSVE algorithm in a black box manner. 

The main result of this letter is a quantum linear system solver based on the QSVE algorithm \cite{Kerenidis2016} 
that achieves a running time of $\Ord{\kappa^2 \norm{A}_F \cdot\text{polylog}(n)/\epsilon}$.
We briefly describe the QSVE algorithm in the next section. 
We then present the quantum linear system solver and provide a complete analysis for the linear system solver as well as a comparison with other approaches to the QLSA in the discussion.

\paragraph{The QSVE algorithm.} \label{s1} 

The QSVE algorithm requires the ability to efficiently prepare the 
quantum states corresponding to the rows and columns of matrix $A$. The matrix entries are stored in the following data structure, 
such that a quantum algorithm with access to this data structure has this ability. 
\begin{lem}[Data Structure~ \cite{Kerenidis2016}]
  \label{data_struc}
  Let $A \in \R^{m \times n}$ be a matrix with entries $A_{ij}$ which arrive in an arbitrary order. 
  There exists a data structure with the following properties:
  \begin{itemize}
  \item A quantum computer with access to the data structure can perform the following mappings in $\Ord{\text{polylog}(mn)}$ time.
  \begin{align}
U_\mathcal{M}: \ket{i}\ket{0}\rightarrow\ket{i,\vec{A_i}} &= \frac{1}{\|\vec{A_i}\|}\sum\limits_{j=1}^{n}A_{ij}\ket{i,j},\notag \\
U_\mathcal{N}: \ket{0}\ket{j}\rightarrow\ket{\vec{A}_F,j} &= \frac{1}{\|A\|_F}\sum\limits_{i=1}^{m}\|\vec{A_i}\|\ket{i,j},
  \end{align}
where $\vec{A_i}\in R^{n}$ correspond to the rows of the matrix $A$ and $\vec{A}_F\in \R^{m}$ is a vector whose entries are the $\ell_{2}$ norms 
of the rows, i.e.\ $(\vec{A}_F)_i=\norm{A_i}$. 
  \item The time required to store a new entry $A_{ij}$ is $\Ord{\text{log}^2(mn)}$ and data structure size is $O(w \log mn)$ where $w$ is the number of non zero entries in $A$. 
  \end{itemize}
\end{lem}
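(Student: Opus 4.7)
The plan is to store $A$ in an augmented binary-tree structure, backed by QRAM for $\Ord{\text{polylog}}$ random access to any individual node. Concretely, for each row $i$ containing at least one nonzero entry I would maintain a binary tree $T_i$ whose leaf at position $j$ stores the pair $(A_{ij}^2,\operatorname{sign}(A_{ij}))$ and whose every internal node stores the sum of the squared values at the leaves of its subtree; in particular the root of $T_i$ holds $\norm{\vec{A_i}}^2$. On top of these I would maintain one further tree $T$ whose leaves store $\norm{\vec{A_i}}^2$ and whose internal nodes again store subtree sums, so that the root of $T$ holds $\norm{A}_F^2$. Since only nodes lying on a root-to-leaf path of some nonzero entry are ever allocated, total storage is $\Ord{w \log(mn)}$ as claimed.

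To implement $U_\mathcal{M}$ on $\ket{i}\ket{0}$, I would use $\ket{i}$ as a control to select $T_i$ and walk down the tree with $\log n$ controlled single-qubit rotations acting on a fresh ancilla register. At depth $\ell$, having arrived at internal node $v$ of $T_i$, the sums $s_{v_L},s_{v_R}$ stored at its two children are read out from QRAM and used to apply a controlled rotation on qubit $\ell+1$ with amplitudes $\sqrt{s_{v_L}/s_v}$ and $\sqrt{s_{v_R}/s_v}$. After $\log n$ such steps the register holds $\norm{\vec{A_i}}^{-1}\sum_j |A_{ij}|\ket{i,j}$, and a final diagonal correction controlled on the signs stored at the leaves converts this into $\ket{i,\vec{A_i}}$. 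The construction of $U_\mathcal{N}$ is identical, but uses the single tree $T$ instead of the family $\{T_i\}$ and leaves the $\ket{j}$ register passive. Each of the $\Ord{\log(mn)}$ rotations requires $\Ord{\text{polylog}(mn)}$ gates plus a constant number of QRAM lookups, matching the stated quantum time bound.

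For the classical side, inserting a new entry $A_{ij}$ triggers an update of the $\Ord{\log n}$ ancestors of leaf $j$ in $T_i$ (allocating any missing ancestors en route so that $T_i$ remains sparse) and then of the $\Ord{\log m}$ ancestors of leaf $i$ in $T$, since the row norm $\norm{\vec{A_i}}^2$ has changed. With arithmetic on $\Ord{\log(mn)}$-bit numbers performed at each touched node, this gives the claimed $\Ord{\log^2(mn)}$ update cost.

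The step I expect to be the most delicate is the sign (or, more generally, phase) bookkeeping: the subtree sums are inherently unsigned, so producing $\sum_j A_{ij}\ket{i,j}$ rather than $\sum_j |A_{ij}|\ket{i,j}$ requires either storing signed amplitudes at internal nodes (and then verifying that the recursively maintained sums remain consistent under partial cancellations on insertion) or deferring the signs to a single conditional phase at the leaf register. I would adopt the latter option, but one must check that this leaf-level phase commutes correctly with the earlier controlled rotations; once that verification is carried out, the rest of the argument reduces to standard binary-tree amplitude-encoding, and the two time bounds and the space bound follow directly from the per-level counts above.
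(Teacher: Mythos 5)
Your construction is essentially identical to the one the paper sketches (and attributes to Kerenidis--Prakash): an array of binary trees with squared amplitudes and signs at the leaves, subtree sums at internal nodes, one extra tree over the row norms, state preparation by conditional rotations down a root-to-leaf path, and the same $\Ord{\log^2(mn)}$ update and $\Ord{w\log(mn)}$ space counts. The one point you flag as delicate is not actually an issue: the leaf-level sign correction is a diagonal unitary in the $\ket{i,j}$ basis applied \emph{after} the amplitude-loading rotations, so it simply multiplies each amplitude by $\operatorname{sign}(A_{ij})$ and there is nothing to commute past.
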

A possible realization of this data structure is based on an array of $m$ binary trees, each binary tree contains at most $n$ leaves which store the squared amplitudes of the corresponding matrix entry $|A_{ij}|^2$, as well as the sign of $A_{ij}$. An internal node of a tree stores the sum of the elements in the subtree rooted at it. The root of the $i^{th}$ tree then contains $\norm{\mathbf{A}_i} ^2, \, i \in [m]$. In order to access the vector of row Frobenius norms, we construct one more binary tree, the $i^{th}$ leaf of which stores $\norm{ \mathbf{A}_i} ^2$. A detailed description of such a binary tree memory structure, and the proof of Lemma~\ref{data_struc} can be found in \cite{Kerenidis2016}. 

The QSVE algorithm is a quantum walk based algorithm that leverages the connection between the singular values $\sigma_i$ of the target matrix $A$ and the principal angles $\theta_i$ between certain subspaces associated with $A$. The relation between quantum walks and eigenvalues has been well known in the literature and has been used in 
several previous results \cite{S04, C10}. However, the quantum walk defined by the QSVE algorithm is particularly interesting for linear systems as instead of the sparsity $s(A)$, it depends on the Frobenius norm $\norm{A}_{F}$.

The QSVE algorithm makes use of a factorization $\frac{A}{\|A\|_F}= \mathcal{M}^{\dagger}\mathcal{N}$, where $\mathcal{M}\in\mathbb{R}^{mn\times m}$ and $\mathcal{N}\in\mathbb{R}^{mn\times n}$ are isometries. The key idea is that the unitary operator $W$ defined by $W=(2\mathcal{M}\mathcal{M}^{\dagger}-I_{mn})(2\mathcal{N}\mathcal{N}^{\dagger}-I_{mn})$, where $I_{mn}$ is the identity matrix, can be implemented efficiently using the data structure in Lemma~\ref{data_struc}. Further, $W$ has two dimensional eigenspaces spanned by $\{\mathcal{M}\mathbf{u}_i,\mathcal{N}\mathbf{v}_i\}$ on which it acts as a roation by angle $\theta_i$, such that $\cos\frac{\theta_i}{2}=\frac{\sigma_i}{\|A\|_F}$. Note that the sub-space spanned by $\{\mathcal{M}\mathbf{u}_i,\mathcal{N}\mathbf{v}_i\}$ is therefore also spanned by $\{\mathbf{w}_i^+,\mathbf{w}_i^-\}$, the eigenvectors of $W$ with eigenvalues $\exp(i\theta_i)$ and $\exp(-i\theta_i)$ respectively. In particular we may write the following decomposition, $\ket{\mathcal{N}\mathbf{v}_i}=\omega_i^+\ket{\mathbf{w}_i^+}+\omega_i^-\ket{\mathbf{w}_i^-}$, with $|\omega_i^-|^2+|\omega_i^+|^2=1$.
Algorithm 1 describes the QSVE algorithm, the analysis is contained in the following lemma.

\begin{algorithm}[!htbp]
\caption{Quantum singular value estimation. \cite{Kerenidis2016}}
\label{algo1}
\begin{enumerate}
  \item Create the arbitrary input state $\ket{\alpha} = \sum_i \alpha_{\vec v_i} \ket{\vec v_i}$.
  \item Append a register $\ket{0^{\lceil \log{m} \rceil}}$ and create the state $\ket{\mathcal{N} \alpha} = \sum_i \alpha_{\vec v_i} \ket{\mathcal{N} \vec v_i}=\sum_i \alpha_{\vec v_i}(\omega_i^+\ket{\mathbf{w}_i^+}+\omega_i^-\ket{\mathbf{w}_i^-})$.
  \item Perform phase estimation \cite{Kitaev1995} with precision $2 \delta >0$ on input $\ket{\mathcal{N}\alpha}$ for $W=(2\mathcal{M}\mathcal{M}^{\dagger}-I_{mn})(2\mathcal{N}\mathcal{N}^{\dagger}-I_{mn})$ and obtain $\sum_i \alpha_{\vec v_i}(\omega_i^+\ket{\mathbf{w}_i^+,\overline{\theta}_i}+\omega_i^-\ket{\mathbf{w}_i^-,-\overline{\theta}_i})$, where $\overline{\theta_i}$ is
     the estimated phase $\theta_i$ in binary bit-strings.
  \item Compute $\overline{\sigma}_i = \cos{(\pm\overline{\theta_i}/2)}||A||_F$.
  \item Uncompute the output of the phase estimation and apply the inverse transformation of step (2) to obtain
    \begin{equation}
      \sum\limits_i \alpha_{\vec v_i} \ket{\vec v_i} \ket{\overline{\sigma_i}}
    \end{equation}
\end{enumerate}
\end{algorithm}

\begin{lem}[Preparation of the Isometries~\cite{Kerenidis2016}]
  \label{isometries}
  Let $A\in \R^{m\times n}$ be a matrix with singular value decomposition $A = \sum_i \sigma_i \vec u_i \vec v_i^{\dagger}$ stored in the data structure
  described in Lemma~\ref{data_struc}. Then there exist matrices $\mathcal{M} \in \R^{mn \times m}$, and $\mathcal{N} \in \R^{mn \times n}$, such that
  \begin{enumerate}
    \item $\mathcal{M}, \mathcal{N}$ are isometries, that is $\mathcal{M}^{\dagger}\mathcal{M} = I_m$ and $\mathcal{N}^{\dagger}\mathcal{N} = I_n$ such that $A$ can be factorized as $A/\norm{A}_F = \mathcal{M}^{\dagger}\mathcal{N}$. 
    
Multiplication by $\mathcal{M},\mathcal{N}$, i.e.\ the mappings $\ket{\alpha} \rightarrow \ket{\mathcal{M}\alpha}$ and $\ket{\beta} \rightarrow \ket{\mathcal{N}\beta}$ can be performed in time $\Ord{\text{polylog}(mn)}$, 
    \item The reflections $2\mathcal{M}\mathcal{M}^{\dagger}-I_{mn}$, $2\mathcal{N}\mathcal{N}^{\dagger}-I_{mn}$, and hence the unitary $W=(2\mathcal{M}\mathcal{M}^{\dagger}-I_{mn})(2\mathcal{N}\mathcal{N}^{\dagger}-I_{mn})$ can be implemented in time $\Ord{\text{polylog}(mn)}$.
    \item The unitary $W$ acts as rotation by $\theta_{i}$ on the two dimensional invariant subspace $\{\mathcal{M}\mathbf{u}_i,\mathcal{N}\mathbf{v}_i\}$ plane, such that $\sigma_i = \cos\frac{\theta_i}{2}\|A\|_F$, where $\sigma_{i}$ is the $i$-th singular value for $A$. 
  \end{enumerate}
\end{lem}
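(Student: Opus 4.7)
The three parts form a natural progression: define the isometries from the data-structure oracles, show they can be implemented efficiently, and establish their spectral/geometric relationship with $A$. My plan is as follows.

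First, I would extract $\mathcal{M}$ and $\mathcal{N}$ directly from the mappings of Lemma~\ref{data_struc}. Setting $\mathcal{M}\ket{i} := U_{\mathcal{M}}\ket{i,0} = \frac{1}{\|\vec{A}_i\|}\sum_j A_{ij}\ket{i,j}$ and $\mathcal{N}\ket{j} := U_{\mathcal{N}}\ket{0,j} = \frac{1}{\|A\|_F}\sum_i \|\vec{A}_i\|\ket{i,j}$, the columns of $\mathcal{M}$ are orthonormal because distinct $\mathcal{M}\ket{i}$ are supported on disjoint first-register values, and likewise for $\mathcal{N}$ in the second register; each is already a unit vector by construction. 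A one-line computation of $\bra{i}\mathcal{M}^{\dagger}\mathcal{N}\ket{j}$ telescopes the $\|\vec{A}_i\|$ factors and yields $A_{ij}/\|A\|_F$, which is precisely the factorization. Applying $\mathcal{M}$ to a general state $\ket{\alpha}$ means appending $\ket{0}$ on the second register and invoking $U_{\mathcal{M}}$ in superposition, at cost $\Ord{\text{polylog}(mn)}$ by Lemma~\ref{data_struc}; the analogous recipe handles $\mathcal{N}$.

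Second, I would realize the reflections $R_{\mathcal{M}} := 2\mathcal{M}\mathcal{M}^{\dagger} - I_{mn}$ and $R_{\mathcal{N}} := 2\mathcal{N}\mathcal{N}^{\dagger} - I_{mn}$ by the standard conjugation trick. Since $U_{\mathcal{M}}$ sends the subspace $\mathbb{C}^m\otimes\ket{0}$ isometrically onto the image of $\mathcal{M}$, one has $R_{\mathcal{M}} = U_{\mathcal{M}}\,(2 P_0 - I)\,U_{\mathcal{M}}^{\dagger}$ with $P_0 = I_m \otimes \ket{0}\bra{0}$, and the middle step is a multi-controlled sign on the ancilla register implementable in $\Ord{\text{polylog}(n)}$ time. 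The flanking unitaries are polylog by part one, and $R_{\mathcal{N}}$ follows by the same construction, so $W = R_{\mathcal{M}} R_{\mathcal{N}}$ lands within the claimed time bound.

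Third, for the spectral statement I would nail down the two-dimensional invariant subspaces geometrically. The factorization gives $\langle \mathcal{M}\mathbf{u}_i \mid \mathcal{N}\mathbf{v}_j \rangle = \sigma_j \delta_{ij}/\|A\|_F$, so $\mathcal{M}\mathbf{u}_i$ and $\mathcal{N}\mathbf{v}_i$ are unit vectors whose inner product equals $\sigma_i/\|A\|_F$, while $\mathcal{M}\mathbf{u}_{j}$ and $\mathcal{N}\mathbf{v}_{j}$ for $j\neq i$ are orthogonal to the plane they span. Defining $\theta_i/2$ as the angle between the two lines so that $\cos(\theta_i/2) = \sigma_i/\|A\|_F$, the plane is invariant under both reflections ($R_{\mathcal{M}}$ fixes $\mathcal{M}\mathbf{u}_i$ and sends $\mathcal{N}\mathbf{v}_i$ to $2(\sigma_i/\|A\|_F)\mathcal{M}\mathbf{u}_i - \mathcal{N}\mathbf{v}_i$, which lies in the plane, and symmetrically for $R_{\mathcal{N}}$), and the classical fact that the composition of two plane reflections through lines meeting at angle $\phi$ is a rotation by $2\phi$ completes the argument. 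The part that deserves the most care, and that I expect to be the main obstacle, is verifying that these 2D blocks together account for all of the action of $W$ that is probed by phase estimation: I would check that vectors orthogonal to every $\mathrm{span}\{\mathcal{M}\mathbf{u}_i,\mathcal{N}\mathbf{v}_i\}$ lie in the joint $\pm 1$-eigenspaces of the two reflections, so that starting from $\mathcal{N}\ket{\alpha}$ confines the state to the direct sum of these planes and phase estimation recovers precisely the angles $\theta_i$ linked to the singular values.
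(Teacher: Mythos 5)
Your proposal is correct and follows essentially the same route as the paper's outline (and the underlying Kerenidis--Prakash argument): define $\mathcal{M},\mathcal{N}$ from the data-structure unitaries, get the factorization from $\braket{i,\vec{A_i}}{\vec{A}_F,j}=A_{ij}/\norm{A}_F$, and extract the rotation angle from $\langle\mathcal{M}\mathbf{u}_i,\mathcal{N}\mathbf{v}_i\rangle=\sigma_i/\norm{A}_F$ --- your ``two reflections through lines at angle $\theta_i/2$'' picture is just a cleaner packaging of the paper's direct computation of $W\ket{\mathcal{N}\vec v_i}$ and the half-angle identity. You are in fact slightly more complete than the text, which defers the conjugation trick $R_{\mathcal{M}}=U_{\mathcal{M}}(2P_0-I)U_{\mathcal{M}}^{\dagger}$ and the invariant-subspace bookkeeping to the cited reference.
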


We outline the ideas involved in the analysis of the QSVE algorithm and refer to \cite{Kerenidis2016} for further details. The map $\mathcal{M}$ appends to an arbitrary input state vector $\ket{\alpha}$ a register that encodes the row vectors $\mathbf{A_i}$ of $A$, such that
\begin{align}
\mathcal{M}: \ket{\alpha}&=\sum\limits_{i=1}^{m}\alpha_i\ket{i}\rightarrow\sum\limits_{i=1}^{m}\alpha_i\ket{i,\vec{A_i}}=\ket{\mathcal{M}\alpha}. \notag 
\end{align}
The map $\mathcal{N}$ similarly appends to an arbitrary input state vector $\ket{\alpha}$ a register that encodes the vector $\vec{A_F}$ whose entries are the $\ell_{2}$ norms $\|\vec{A_i}\|_F$ of the rows of $A$, 
\begin{align}
\mathcal{N}: \ket{\alpha}=\sum\limits_{j=1}^{n}\alpha_j\ket{j}\rightarrow\sum\limits_{j=1}^{n}\alpha_j\ket{\vec{A}_F,j}=\ket{\mathcal{N}\alpha}. \notag 
\end{align}
The above maps can be efficiently implemented given the memory structure described by Lemma~\ref{data_struc}.

The factorization of $A$ follows from the amplitude encoding of $\vec{A_i}$ and $\vec{A}_F$. We have $\ket{i,\vec{A_i}}=\frac{1}{\|\vec{A_i}\|}\sum\limits_{j=1}^{n}A_{ij}\ket{i,j}$ and $\ket{\vec{A}_F,j}=\frac{1}{\|A\|_F}\sum\limits_{i=1}^{m}\|\vec{A_i}\|\ket{i,j}$, implying that $(\mathcal{M}^{\dagger}\mathcal{N})_{ij}=\braket{i,\vec{A_i}}{\vec{A}_F,j}=\frac{A_{ij}}{\|A\|_F}$. Similarly, it follows that $\mathcal{M}$ and $\mathcal{N}$ have orthonormal columns and thus $\mathcal{M}^{\dagger}\mathcal{M}=I_m$ and $\mathcal{N}^{\dagger}\mathcal{N}=I_n$.

To show the relation between the eigenvalues of $W$ and the singular values of $A$, we consider the following:
\begin{align}
W\ket{\mathcal{N}\vec{v}_i}=&(2\mathcal{M}\mathcal{M}^{\dagger}-I_{mn})(2\mathcal{N}\mathcal{N}^{\dagger}-I_{mn})\ket{\mathcal{N}\vec{v}_i}\nonumber\\
=&(2\mathcal{M}\mathcal{M}^{\dagger}-I_{mn})\ket{\mathcal{N}\vec{v}_i}\nonumber\\=&2\mathcal{M}\frac{A}{\|A\|_F}\ket{\vec{v}_i}-\ket{\mathcal{N}\vec{v}_i}\nonumber\\
=&\frac{2\sigma_i}{\|A\|_F}\ket{\mathcal{M}\vec{u}_i}-\ket{\mathcal{N}\vec{v}_i},
\end{align}
where we used the singular value decomposition $A=\sum_i\sigma_i\ket{\vec{u}_i}\bra{\vec{v}_i}$, and the fact that the right singular vectors $\{\vec{v}_i\}$ are mutually orthonormal. Note that $W$ rotates $\ket{\mathcal{N}\vec{v}_i}$ in the plane of $\{\mathcal{M}\mathbf{u}_i,\mathcal{N}\mathbf{v}_i\}$ by $\theta_i$, such that 
\begin{align}
	\cos\theta_i&=\bra{\mathcal{N}\vec{v}_i}W\ket{\mathcal{N}\vec{v}_i} \notag \\ 
	&=\frac{2\sigma_i}{\|A\|_F} \bra{\vec v_i} A^{\dagger} \ket{\vec u_i}-1 \notag \\
	&=\frac{2\sigma_i^2}{\|A\|_F^2}-1 ,
\end{align} where we have used the fact that $(2\mathcal{M}\mathcal{M}^\dagger-I_{mn})$ represents a reflection in $\ket{\mathcal{M}\vec{u}_i}$
and that $A^{\dagger} = \mathcal{N}^{\dagger}\mathcal{M}= \sum_i \sigma_i \ket{\vec v_i} \bra{\vec u_i}$. 
Therefore the angle between $\ket{\mathcal{N}\vec{v}_i}$ and $\ket{\mathcal{M}\vec{u}_i}$ is given by $\frac{\theta_i}{2}$, i.e.\ half of the total rotation angle. Comparing the above expression with the half-angle formula for cosine functions, we obtain the relation $\cos\left(\frac{\theta_i}{2}\right)=\frac{\sigma_i}{\|A\|_F}$.

The two dimensional sub-space spanned by $\{\mathcal{M}\mathbf{u}_i,\mathcal{N}\mathbf{v}_i\}$ is therefore invariant under the action of $W$ which acts on it as a rotation by angle $\theta_i$.
The corresponding eigenvectors of $W$ have hence eigenvalues $\exp( \pm i\theta_i)$, and in particular we can perform phase estimation to get an estimate $\pm \overline{\theta_{i}}$ and then compute $\overline{\sigma_{i}} = \cos(\overline{\theta_{i}}/2) \norm{A}_{F}$ to estimates the singular values. We therefore have established the correctness of Algorithm~\ref{algo1} and have the following theorem, 

\begin{thm}[Quantum Singular Value Estimation~\cite{Kerenidis2016}]
  \label{qsve_thm}
  Let $A \in \R^{m \times n}$ be a matrix with singular value decomposition $A = \sum_i \sigma_i \vec u_i \vec v_i^{\dagger}$ stored in the data structure
 in Lemma~\ref{data_struc}. Further let $\delta > 0$ be the precision number. There is an algorithm that runs in
  $\Ord{\text{polylog}(mn)/\delta}$ and performs the mapping $\sum_i \alpha_{\vec v_i} \ket{\vec v_i} \ket{0} \rightarrow \sum_i \alpha_{\vec v_i} \ket{\vec v_i}\ket{\overline{\sigma_i}}$ where $\overline{\sigma_i} \in \sigma_i \pm \delta \norm{A}_F$ for all $i$ with probability at least $1-1/\text{poly}(n)$.
\end{thm}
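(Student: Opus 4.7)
The plan is to follow Algorithm \ref{algo1} line by line, leveraging Lemma \ref{isometries} (which supplies the isometry $\mathcal{N}$, its inverse, and the unitary $W$ with the required spectral structure) together with the phase-estimation primitive of Theorem \ref{pest}, and then to convert phase-estimation precision into singular-value precision.

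First I would track the state through each step. Step 2 takes $\sum_i\alpha_{\vec v_i}\ket{\vec v_i}\ket{0^{\lceil\log m\rceil}}$ to $\sum_i\alpha_{\vec v_i}\ket{\mathcal{N}\vec v_i}$; by Lemma \ref{isometries} this map runs in $\Ord{\text{polylog}(mn)}$ time, and $\ket{\mathcal{N}\vec v_i}$ lies in the $W$-invariant plane $\mathrm{span}\{\mathcal{M}\vec u_i,\mathcal{N}\vec v_i\}$, so it decomposes as $\omega_i^+\ket{w_i^+}+\omega_i^-\ket{w_i^-}$ with $W\ket{w_i^{\pm}}=e^{\pm i\theta_i}\ket{w_i^{\pm}}$ and $\cos(\theta_i/2)=\sigma_i/\|A\|_F$. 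Step 3 applies phase estimation on $W$ at precision $2\delta$; by Theorem \ref{pest} this produces, with probability $1-1/\mathrm{poly}(n)$, a register holding $\pm\overline{\theta_i}$ with $|\overline{\theta_i}-\theta_i|\le 2\delta$, correlated with $\ket{w_i^\pm}$ respectively.

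Next I would handle the sign ambiguity and the error bound. Because $x\mapsto\cos(x/2)$ is even, the value $\overline{\sigma_i}=\cos(\overline{\theta_i}/2)\|A\|_F$ computed coherently in Step 4 is identical on the $+\overline\theta_i$ and $-\overline\theta_i$ branches, so the new register holds a single value $\ket{\overline{\sigma_i}}$ across both branches. For the error I would invoke the Lipschitz estimate $|\cos(\overline{\theta_i}/2)-\cos(\theta_i/2)|\le |\overline\theta_i-\theta_i|/2\le\delta$, multiply by $\|A\|_F$, and compare with $\sigma_i=\cos(\theta_i/2)\|A\|_F$ (from Lemma \ref{isometries}) to conclude $|\overline{\sigma_i}-\sigma_i|\le\delta\|A\|_F$.

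Finally, Step 5 uncomputes the phase register and inverts $\mathcal{N}$. Because the two $\ket{w_i^{\pm}}$ branches now carry an identical value in the singular-value register, they recohere into $\ket{\mathcal{N}\vec v_i}\ket{\overline{\sigma_i}}$ under the uncomputation, and applying $\mathcal{N}^\dagger$ restores $\ket{\vec v_i}\ket{\overline{\sigma_i}}$ as required. For the runtime, phase estimation at precision $2\delta$ requires $\Ord{1/\delta}$ controlled applications of $W$; each application of $W$ (and of $\mathcal{N},\mathcal{N}^\dagger$) costs $\Ord{\text{polylog}(mn)}$ by Lemma \ref{isometries}, for a total of $\Ord{\text{polylog}(mn)/\delta}$. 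The one subtle point I would verify carefully is the coherent cancellation of the sign via the even cosine in Step 4: it is what allows the two $\pm\overline\theta_i$ branches to interfere back into a clean output $\ket{\vec v_i}\ket{\overline{\sigma_i}}$ rather than leaving residual entanglement with the phase register.
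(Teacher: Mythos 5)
Your proposal is correct and follows essentially the same route as the paper: step through Algorithm~\ref{algo1}, use Lemma~\ref{isometries} to implement $\mathcal{N}$ and $W$ in $\Ord{\text{polylog}(mn)}$ time with $\cos(\theta_i/2)=\sigma_i/\norm{A}_F$, apply Theorem~\ref{pest} at precision $2\delta$, and convert the phase error to $|\overline{\sigma}_i-\sigma_i|\leq\delta\norm{A}_F$ via the Lipschitz bound on the cosine, giving the stated $\Ord{\text{polylog}(mn)/\delta}$ runtime. Your explicit treatment of the $\pm\overline{\theta}_i$ branches recohering because $\cos$ is even is a welcome detail that the paper only handles implicitly in steps 4--5 of Algorithm~\ref{algo1}.
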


The runtime of QSVE is dominated by the phase estimation procedure which returns an $\delta$-close estimate of $\theta_i$, s.t.\ $|\overline{\theta}_i -\theta_i| \leq 2 \delta$, which translates into the estimated singular value via $\overline{\sigma}_i = \cos{(\overline{\theta}_i/2)} \norm{A}_F$. 
The error in $\sigma_i$ can then be bounded from above by $|\overline{\sigma}_i - \sigma_i | \leq \delta \norm{A}_F$. 
The unitary $W$ can be implemented in time $\Ord{\text{polylog}(mn)}$ by Lemma~\ref{isometries}, by Theorem~\ref{pest} the running time for estimating of the singular values with additive error $\delta \norm{A}_F$ 
in $\Ord{\text{polylog}(mn)/\delta}$.

\paragraph{Quantum linear system algorithm.} \label{s2} 

Without loss of generality we can assume that the matrix $A$ is Hermitian as it is well known that the general case can be reduced to the Hermitian 
case \cite{Harrow2009a}. A QSVE algorithm immediately yields a linear system solver for positive definite matrices as the estimated singular values and eigenvalues are related via $\overline{\sigma_i}=|\overline{\lambda_i}|$. 
In order to solve general linear systems we need to recover the sign of each $\overline{\lambda_i}$. 
We provide a simple algorithm that recovers the signs using the QSVE procedure as a black box incurring only a constant overhead over the QSVE. 

\begin{algorithm}[!hbb]
  \caption{Quantum linear system solver.}
  \label{algo2}
  \begin{enumerate}
    \item Create the state $\ket{\mathbf{b}} = \sum_i \beta_{i}  \ket{\vec v_i}$ with $\vec v_i$ being the singular vectors of $A$.
    \item Perform two QSVEs as in Algorithm~\ref{algo1} for matrices $A, A+ \mu I$ with precision $\epsilon=1/\kappa $ 
    where $\mu=4/\kappa$ to obtain 
    $$\sum_i \beta_{i}  \ket{\vec v_i}_A \ket{|\overline{\lambda}_i|}_B \ket{|\overline{\lambda}_i + \mu|}_C.$$
    \item Add an auxiliary register and set it to $1$ if the value in register $B$ is greater than that in register $C$ and apply a conditional
    phase gate:
    $$\sum_i (-1)^{f_i} \beta_{i}  \ket{\vec v_i}_A \ket{|\overline{\lambda}_i|}_B \ket{|\overline{\lambda}_i + \mu|}_C \ket{f_i}_D.$$
    \item Add an ancilla register and apply a rotation conditioned on register $B$ with $\gamma=O(1/\kappa)$.  
    Then uncompute the registers $B,C, D$ to obtain
    \begin{align*}
    \sum_i(-1)^{f_i} \beta_{i} \ket{\vec v_i}\left( \frac{  \gamma}{\overline{\lambda_i}} \ket{0} + \sqrt {1 - \frac{  \gamma}{\overline{\lambda_i}}^{2} }  \ket{1} \right ) 
    \end{align*}
    Post-select on the ancilla register being in state $\ket{0}$.
  \end{enumerate}
\end{algorithm}
\noindent  The main result of this letter is the following theorem.
\begin{thm}
\label{thm_qlsa}
 	Let $A \in \R^{n \times n}$ be a Hermitian matrix with spectral decomposition 
 	$A = \sum_i \lambda_i \mathbf{\vec u_i \vec u_i}^{\dagger}$ stored in the data structure 
 	in Lemma~\ref{data_struc}. Further let $\kappa$ be the condition number $A$, and $\norm{A}_F$ the Frobenius norm and $\epsilon > 0$ be a precision parameter. Then Algorithm~\ref{algo2} has runtime $\Ord{\kappa^2 \cdot \text{polylog}(n) \cdot \norm{A}_F/\epsilon}$ that outputs state 
 	 $\ket{\overline{A^{-1} \mathbf b}}$ 
 	such that $\norm{ \ket{ \overline{ A^{-1} \mathbf b } } - \ket{ A^{-1} \mathbf b} } \leq \epsilon $. 
 	\end{thm}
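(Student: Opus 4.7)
The plan is to verify that Algorithm \ref{algo2} outputs a state $\epsilon$-close to $\ket{A^{-1}\mathbf b}$ and to bound its runtime. I would break the analysis into three parts: (i) correctness of the sign-recovery trick in step 3, (ii) error propagation through the conditional rotation and post-selection, and (iii) the amplitude-amplification overhead.

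For (i), expand $\mathbf b=\sum_i\beta_i\vec v_i$ in the eigenbasis of $A$. Since $A$ and $A+\mu I$ are simultaneously diagonal on $\{\vec v_i\}$ with eigenvalues $\lambda_i$ and $\lambda_i+\mu$, the two QSVEs coherently prepare the registers $\ket{|\overline{\lambda_i}|}\ket{|\overline{\lambda_i+\mu}|}$. Choosing the QSVE precision $\delta$ so that $\delta\norm{A}_F$ is strictly smaller than $\mu/2$, I would verify case-by-case that $|\lambda_i+\mu|>|\lambda_i|$ iff $\lambda_i>0$ whenever $|\lambda_i|\ge 1/\kappa$, with a gap exceeding twice the combined estimation error; hence the comparison bit $f_i$ equals $0$ exactly when $\lambda_i>0$, and the phase $(-1)^{f_i}$ recovers $\mathrm{sign}(\lambda_i)$. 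Writing $\overline{\lambda_i}:=(-1)^{f_i}|\overline{\lambda_i}|$, the register entering step 4 carries a signed estimate with $|\overline{\lambda_i}-\lambda_i|\le\delta\norm{A}_F$.

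For (ii), the unnormalised post-selected vector is $\gamma\sum_i(\beta_i/\overline{\lambda_i})\ket{\vec v_i}$, to be compared with the ideal $\gamma\sum_i(\beta_i/\lambda_i)\ket{\vec v_i}$ of norm $\gamma\norm{A^{-1}\mathbf b}$. Using $1/\overline{\lambda_i}-1/\lambda_i=(\lambda_i-\overline{\lambda_i})/(\lambda_i\overline{\lambda_i})$ and pulling one factor of $|\overline{\lambda_i}|^{-1}=\Ord{\kappa}$ outside the $\ell_2$ sum, the unnormalised error is $\Ord{\gamma\,\delta\norm{A}_F\,\kappa\,\norm{A^{-1}\mathbf b}}$. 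Dividing by the target norm and invoking the standard fact that two vectors of comparable norm differing by $\eta$ have normalised difference $\Ord{\eta/\norm{\text{target}}}$, setting $\delta=\Theta(\epsilon/(\kappa\norm{A}_F))$ guarantees the final precision $\epsilon$.

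For (iii), I would sum runtimes. Each QSVE takes $\Ordmax{1/\delta}=\Ordmax{\kappa\norm{A}_F/\epsilon}$ by Theorem \ref{qsve_thm}, while steps 3-4 add only polylog overhead. Post-selection succeeds with probability $\gamma^2\norm{A^{-1}\mathbf b}^2=\Omega(1/\kappa^2)$ when $\gamma=\Theta(1/\kappa)$, so amplitude amplification contributes an extra $\Ord{\kappa}$ factor, yielding the claimed $\Ordmax{\kappa^2\norm{A}_F/\epsilon}$. The main obstacle I anticipate is step (ii): the inversion $1/\lambda$ is very sensitive near $1/\kappa$, and the key subtlety is charging only one factor of $\kappa$ to the error analysis (via an amplitude-weighted sum) and the other to amplitude amplification; a looser bound would degrade the runtime to $\kappa^3$.
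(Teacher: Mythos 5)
Your proposal is correct in its overall skeleton and matches the paper's structure (sign recovery by comparing QSVE estimates for $A$ and $A+\mu I$, precision $\delta=\Theta(\epsilon/(\kappa\norm{A}_F))$, success probability $\Omega(1/\kappa^2)$ boosted to an $\Ord{\kappa}$ overhead by amplitude amplification), but your error analysis takes a genuinely different and more elementary route. The paper imports the HHL machinery: it defines filter functions $\mathrm{f},\mathrm{g}$ and the map $\lambda\mapsto\ket{\mathrm{h}(\lambda)}$, invokes the $\Ord{\kappa}$-Lipschitz property of that map (Lemma~\ref{lipschitz}), and bounds $\norm{\ket{\overline\psi}-\ket{\psi}}$ via $\sqrt{2(1-\Re\braket{\overline\psi}{\psi})}$ \emph{before} post-selection, obtaining $\Ord{\kappa\delta\norm{A}_F}$. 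You instead work directly on the post-selected vector using $1/\overline{\lambda_i}-1/\lambda_i=(\lambda_i-\overline{\lambda_i})/(\lambda_i\overline{\lambda_i})$ and an amplitude-weighted $\ell_2$ sum. Your route has the advantage of directly bounding the distance between the \emph{normalized} output states that the theorem actually asserts (the paper leaves the step from the pre-selection error to the post-selected error implicit, as HHL does by arguing the success amplitude is stable); the paper's route buys a uniform treatment of the ill-conditioned subspace through $\mathrm{g}$, which your direct computation sidesteps by assuming all relevant $|\lambda_i|\ge 1/\kappa$. Both correctly charge only one factor of $\kappa$ to the error and one to amplitude amplification, which, as you note, is the crux of avoiding a $\kappa^3$ runtime.

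One caveat on part (i): your claim that $|\lambda_i+\mu|>|\lambda_i|$ iff $\lambda_i>0$ for all $|\lambda_i|\ge 1/\kappa$ holds only when $\mu<2/\kappa$ (for $\lambda_i\in[-2/\kappa,-1/\kappa]$ and $\mu=4/\kappa$ one gets $|\lambda_i+\mu|\ge|\lambda_i|$ even with perfect estimates, since $\lambda_i+\mu\ge 2/\kappa$). The case-by-case verification you propose would therefore fail for the algorithm's stated $\mu=4/\kappa$; a consistent choice is $\mu\le 1/\kappa$ together with estimation error well below $\mu/2$. This tension is present in the paper's own parameter choice, so it is not a defect peculiar to your argument, but it is worth flagging rather than asserting the equivalence without checking the negative branch.
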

	
\begin{proof}

	We first argue that Algorithm~\ref{algo2} correctly recovers the sign of the $\lambda_{i}$. The algorithm compares the estimates obtained by performing QSVE for $A$ and for $A^\prime = A+\mu I_{n}$, where $\mu$ is a positive scalar to be chosen later. 
	The matrix $A^\prime$ has the same eigenvectors as $A$, but has eigenvalues $\lambda_i + \mu$. Note that for $\lambda_{i} \geq 0$ we have $|\lambda_i + \mu| = |\lambda_i| + 
	|\mu| \geq |\lambda_{i}| $, however if $\lambda_{i} \leq -\mu/2 $ then $|\lambda_{i} + \mu| \leq |\lambda_{i}|$. 
	
	 Thus, if the estimates were perfect, then choosing $\mu = 2/\kappa$ would recover the sign correctly when the eigenvalues of $A$ lie in the interval $[-1, -1/\kappa] \cup [ 1/\kappa, 1]$. With the choice $\mu=4/\kappa$ and $\epsilon= 1/\kappa$ we find that the signs are still correct for all $\lambda_{i}$.

	We outline the derivation of the runtime, the analysis of the error bounds appears in the appendix. 
	The running time for QSVE with precision $\epsilon = 1/\kappa$ is $\tilde{O}( \kappa \norm{A}_{F})$. 
  Considering the success probability of the post-selection step, we require on average $\Ord{\kappa^2}$
	repetitions of the coherent computation. This can be reduced to $\Ord{\kappa}$ using amplitude amplification~\cite{Brassard2002}.
	Therefore an upper-bound of the runtime of our algorithm is given by $\Ord{\kappa^2 \cdot \text{polylog}(n) \norm{A}_F /\epsilon}$.
	\end{proof} 
	 
\paragraph{Discussion.}

The error dependence on the Frobenius norm suggests that our algorithm is most accurate when the $\norm{A}_F$ is bounded by some constant, in which case the algorithm returns the output state with a constant $\epsilon$-error in polylogarithmic time even if the matrix is non-sparse. More generally, as in the QLSA algorithm we can assume that the spectral norm $\norm{A}_*$ is bounded by a constant, although the Frobenius norm may scale with the dimensionality of the matrix. In such cases we have $\norm{A}_F=\Ord{\sqrt{n}}$. Hence in such scenarios the proposed algorithm runs in $\Ord{\kappa^2\sqrt{n} \cdot\text{polylog}(n)/\epsilon}$ and returns the output with a constant $\epsilon$-error. 

It was shown in~\cite{berry2009black} that given black-box access to the matrix elements, Hamiltonian simulation with error $\delta_h$ can be performed in time $\Ord{n^{2/3}\cdot\text{polylog}(n)/\delta_h^{1/3}}$.
This leads to a linear system algorithm based on~\cite{Harrow2009a} which scales as $\Ord{\kappa^2 n^{2/3} \cdot\text{polylog}(n)/\epsilon}$, where we have assumed the dominant error comes from phase estimation, and hence the error introduced by the Hamiltonian simulation is neglected. It was also shown numerically that the method of~\cite{berry2009black} attains a typical scaling of $\Ord{\sqrt{n}\cdot\text{polylog}(n)/ \delta_h^{1/2}}$ when applied to randomly selected matrices, leading to a $\Ord{\kappa^2\sqrt{n} \cdot\text{polylog}(n)/\epsilon}$ linear system algorithm. The work ~\cite{berry2009black} assumes that we have black-box access to the matrix entries, that is quantum queries of the form $\ket{i,j, 0} \to \ket{i,j,A_{ij}}$ are allowed. We note that in Lemma \ref*{data_struc} we instead assume black-box access to a data structure constructed in linear time from a stream of the matrix entries. Our access model thus differs from the one used in ~\cite{berry2009black} and therefore a direct comparison of the two results is not appropriate. The QSVE-based linear system solver achieves a $O(\sqrt{n})$-scaling with quantum access to 
the data structure in Lemma 1, and it is an interesting open question if one can achieve a similar scaling in the model with black box access to matrix entries. 

We also note that for practical implementations, the constant runtime overhead with respect to a given set of elementary fault-tolerant quantum gates is an important consideration. It has been shown by Scherer \textit{et al.}~\cite{Scherer2017} that current approaches to the QLSA potentially suffer from a large constant overhead, hindering prospects of near-term applications. Whether our proposed QSVE-based algorithm exhibits a more advantageous constant overhead due to the absence of Hamiltonian simulation, remains an open question.
\paragraph{Acknowledgements.}
The authors thank Simon Benjamin, Joseph Fitzsimons, Patrick Rebentrost and Nathan Wiebe for helpful comments on the manuscript, and Andrew Childs for his feedback on the earlier version. The authors also acknowledge support from Singapore’s Ministry of Education and National Research Foundation. This material is based on research funded in part by the Singapore National Research Foundation under NRF Award NRF-NRFF2013-01.

\bibliographystyle{apsrev}
\bibliography{qlsa_improved}

\section{Appendix}
\label{analysis}
In this appendix we establish error bounds on the final state.
The analysis below closely follows the analysis of the QLSA algorithm \cite{Harrow2009a}.

We use the filter functions $\mathrm{f}$ and $\mathrm{g}$ ~\cite{Hansen1998}, which allow us to invert only the the well-conditioned part of the matrix, i.e.\ the space which is spanned by the eigenspaces with eigenvalues, $\lambda_i\geq 1/\kappa$. We define the function 
$\mathrm{f}(\lambda) := 1/(\gamma\kappa \lambda)$ for $|\lambda| \geq 1/\kappa$, $\mathrm{f}(\lambda):=0$ for $\lambda \leq 1/2\kappa$, and $\mathrm{f}(\lambda)$ is a smooth interpolating function $\eta_{1}(\lambda)$ for $1/2\kappa \leq \lambda \leq 1/\kappa$. Similarly, we define $\mathrm{g}(\lambda) := 0$ for $|\lambda| \geq 1/ \kappa$,  $\mathrm{g}(\lambda) := 1/2$ for $\lambda \leq 1/2\kappa$, and $\mathrm{g}(\lambda)$ is an interpolating function $\eta_{2}(\lambda)$ for $1/2\kappa \leq \lambda \leq 1/\kappa$. The interpolating functions $\eta_{1}, \eta_{2}$ are chosen such that $f^{2}(\lambda) + g^{2}(\lambda) \leq 1$ for all $\lambda \in \R$. The algorithm in the main text corresponds to the choice $\mathrm{g}(\lambda)=0$.

Let $\gamma=\Ord{1/\kappa}$ be some constant which assures that the controlled rotation angle is less than $2\pi$ for any eigenvalues. 
We then define the map
\begin{multline}
        \ket{\mathrm{h}(\lambda)} := \sqrt{1-\mathrm{f}(\lambda)^2-\mathrm{g}(\lambda)^2} \ket{\mathrm{NO}} \\+ \mathrm{f}(\lambda) \ket{\mathrm{WC}} + \mathrm{g}(\lambda) \ket{\mathrm{IC}},
\end{multline}
with $\mathrm{f}^2(x) + \mathrm{g}^2(x) \leq 1$, where $\ket{\mathrm{NO}}$ indicates that no matrix inversion has taken place, $\ket{\mathrm{IC}}$ means that 
part of $\ket{\mathbf{b}}$ is in the ill-conditioned subspace of $A$, and $\ket{\mathrm{WC}}$ means that the matrix inversion has taken place and is
in the well conditioned subspace of $A$. 
This allows us to invert only the well conditioned part of the matrix while it flags the ill conditioned ones and 
interpolates between those two behaviours when $1/(2\kappa) < |\lambda| < 1/\kappa$. We therefore only invert eigenvalues which 
are larger than $1/(2\kappa)$, which motivates the choice of $\mu$ in Algorithm~\ref{algo2}.

Let $Q$ be the error-free operation corresponding to the QSVE subroutine followed by the controlled rotation without post-selection, i.e.\ 
\begin{align}
	\ket{\psi} := Q \ket{\mathbf b}\ket{0} \rightarrow \sum_i \beta_i \ket{\mathbf  v_i} \ket{\mathrm{h}(\lambda_i)}.
\end{align}
 $\overline{Q}$ in contrast describes the same procedure but the phase estimation step is erroneous, i.e.\
 \begin{align}
 	\ket{\overline{\psi}} := \overline{Q} \ket{\mathbf  b}\ket{0} \rightarrow \sum_i \beta_i \ket{\mathbf v_i} \ket{\mathrm{h}(\overline{\lambda}_i)}.
 \end{align}

We want to bound the error in $\norm{ \overline{Q} - Q}$. By choosing a general state $\ket{\mathbf b}$, this is equivalent to the bound
in $\norm{ Q \ket{\mathbf b} - \overline{Q} \ket{\mathbf b}} := \norm{ \ket{\overline{\psi}} - \ket{\psi} }$. We will make use of the following lemma.
\begin{lem}[\cite{Harrow2009a}]
\label{lipschitz}
	The map $\lambda \rightarrow \ket{h(\lambda)}$ is $\Ord{\kappa}$-Lipschitz, i.e.\ $\forall \lambda_i \neq \lambda_j$:
	\begin{align}
		\norm{ \ket{\mathrm{h}(\lambda_i)} - \ket{\mathrm{h}(\lambda_j)} } 
		\leq c \kappa | \lambda_i - \lambda_j|,
	\end{align}
	for some $c  \leq \pi/2 = \Ord{1}$. 
\end{lem}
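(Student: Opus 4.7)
The plan is to view $\lambda \mapsto \ket{\mathrm{h}(\lambda)}$ as a piecewise smooth curve in the three-dimensional real subspace spanned by the orthonormal triple $\{\ket{\mathrm{NO}},\ket{\mathrm{WC}},\ket{\mathrm{IC}}\}$ and reduce the global Lipschitz statement to a pointwise bound on the speed. Whenever the curve is piecewise $C^1$ on the segment joining $\lambda_j$ to $\lambda_i$, the fundamental theorem of calculus gives
\begin{align*}
\norm{\ket{\mathrm{h}(\lambda_i)}-\ket{\mathrm{h}(\lambda_j)}} \le \int_{\lambda_j}^{\lambda_i} \norm{\partial_\lambda \ket{\mathrm{h}(\lambda)}}\, d\lambda,
\end{align*}
so it suffices to show $\norm{\partial_\lambda \ket{\mathrm{h}(\lambda)}} = \Ord{\kappa}$ uniformly in $\lambda$.

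Writing $\alpha(\lambda) := \sqrt{1-\mathrm{f}(\lambda)^2-\mathrm{g}(\lambda)^2}$, differentiating the definition and using orthonormality of the basis yields
\begin{align*}
\norm{\partial_\lambda \ket{\mathrm{h}(\lambda)}}^2 = \alpha'(\lambda)^2 + \mathrm{f}'(\lambda)^2 + \mathrm{g}'(\lambda)^2.
\end{align*}
The unit-norm constraint $\alpha^2 + \mathrm{f}^2 + \mathrm{g}^2 \equiv 1$ implies $\alpha\alpha' = -(\mathrm{f}\mathrm{f}' + \mathrm{g}\mathrm{g}')$, so $|\alpha'|$ is controlled by $|\mathrm{f}'|$ and $|\mathrm{g}'|$ up to a factor $1/\alpha$. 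Hence it is enough to bound $|\mathrm{f}'|$, $|\mathrm{g}'|$ by $\Ord{\kappa}$ and to check that $\alpha$ stays uniformly bounded below by a positive constant.

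I would then verify the speed bound in each of the three regions defining $\mathrm{f}$ and $\mathrm{g}$. In the ill-conditioned region $|\lambda| \le 1/(2\kappa)$, both $\mathrm{f}$ and $\mathrm{g}$ are constant and the speed vanishes. In the well-conditioned region $|\lambda| \ge 1/\kappa$ we have $\mathrm{g}=0$ and $\mathrm{f}(\lambda) \propto 1/\lambda$; differentiating gives $|\mathrm{f}'(\lambda)| = \Ord{1/\lambda^2} = \Ord{\kappa}$ after using the normalization $|\mathrm{f}(\pm 1/\kappa)| \le 1$ built into the choice of $\gamma$, and since $|\mathrm{f}|$ is then bounded strictly below $1$, $\alpha$ is bounded below by an absolute constant. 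In the transition window $1/(2\kappa) \le |\lambda| \le 1/\kappa$ of width $\Ord{1/\kappa}$, the interpolants $\eta_1,\eta_2$ of~\cite{Hansen1998,Harrow2009a} are chosen smooth with boundary values differing by $\Ord{1}$, so $|\eta_k'| = \Ord{\kappa}$, and the same Cauchy--Schwarz bound on $\alpha'$ applies. Combining the three regimes gives $\norm{\partial_\lambda \ket{\mathrm{h}(\lambda)}} \le c\kappa$ uniformly, and integrating yields the desired $\Ord{\kappa}$ Lipschitz bound. The sharp constant $c \le \pi/2$ follows by the standard chord-versus-arc estimate on the unit sphere, $\norm{\ket{h(\lambda_i)}-\ket{h(\lambda_j)}} = 2\sin(\Theta/2) \le \Theta$, with $\Theta$ the geodesic arclength that the integral above controls.

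The main obstacle will be preventing the potential blow-up of $\alpha' = -(\mathrm{f}\mathrm{f}'+\mathrm{g}\mathrm{g}')/\alpha$ at points where $\alpha \to 0$, i.e.\ where $\mathrm{f}^2+\mathrm{g}^2$ approaches $1$. This is handled entirely by the design of the filter functions in~\cite{Hansen1998,Harrow2009a}: the interpolants $\eta_1,\eta_2$ are chosen so that $\mathrm{f}^2+\mathrm{g}^2$ stays strictly below $1$ throughout, which makes the $\alpha'$ contribution at most a constant multiple of $\max(|\mathrm{f}'|,|\mathrm{g}'|) = \Ord{\kappa}$. Verifying this uniform lower bound on $\alpha$ for the explicit interpolants of~\cite{Harrow2009a} is the only technical point, after which the three-regime estimate closes the proof.
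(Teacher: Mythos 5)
Your argument is correct and is essentially the proof from \cite{Harrow2009a}, which this paper states as an imported lemma without reproducing the proof: reduce the Lipschitz bound to a uniform $\Ord{\kappa}$ bound on the speed $\norm{\partial_\lambda\ket{\mathrm{h}(\lambda)}}$, verify it separately on the three regions, and control the $\sqrt{1-\mathrm{f}^2-\mathrm{g}^2}$ coefficient using that $\mathrm{f}^2+\mathrm{g}^2$ stays bounded away from $1$ (for the standard normalization $|\mathrm{f}|,|\mathrm{g}|\le 1/2$ one has $\mathrm{f}^2+\mathrm{g}^2\le 1/2$, so your worry about $\alpha\to 0$ does not materialize). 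The only loose end is that the specific constant $c\le\pi/2$ is asserted via the chord-versus-arc inequality rather than computed from the explicit interpolants $\eta_1,\eta_2$, but the lemma is only ever invoked with $c=\Ord{1}$, so this does not affect the error analysis.
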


 As $\norm{ \ket{\overline{\psi}} - \ket{\psi} } = \sqrt{2 \left(1-Re \braket{\overline{\psi}}{\psi} \right)},$ it suffices 
to lower-bound $Re\braket{\overline{\psi}}{\psi}$:
 \begin{align}
 	Re \braket{\overline{\psi}}{\psi} = \sum\limits_{i=1}^N |\beta_i |^2  Re\braket{\mathrm{h}(\overline{\lambda}_i)}{\mathrm{h}(\lambda_i)}\notag \\
 	\geq \sum\limits_{i=1}^N |\beta_i |^2 \left( 1 - \frac{c^2 \kappa^2 \delta^2 \norm{A}_F^2}{2} \right),
 \end{align}
where we used the error bounds of the QSVE subroutine for the eigenvalue distance, i.e. $| \lambda_i - \overline{\lambda}_i | \leq \delta \norm{A}_F$, which is a consequence of the phase estimation accuracy, and the $\Ord{\kappa}$-Lipschitz property
 in Lemma~(\ref{lipschitz}).
Since $ 0 \leq Re \braket{\overline{\psi}}{\psi} \leq 1$, it follows that 
\begin{align}
	1 - Re \braket{\overline{\psi}}{\psi} \leq 
	\sum\limits_{i=1}^N |\beta_i |^2 \left(\frac{c^2 \kappa^2 \delta^2 \norm{A}_F^2}{2} \right) 
\end{align}
Using $\sum_i |\beta_i|^2 = 1$, the distance can be bounded as 
\begin{align}
	\norm{ \ket{\overline{\psi}} - \ket{\psi} } \leq \Ord{\kappa \delta \norm{A}_F}.
\end{align}
If we require this error to be of $\Ord{\epsilon}$,
we need to take the phase estimation accuracy to be $\delta = \Ord{\frac{\epsilon}{\kappa \norm{A}_F}}$. 
This results in a runtime $\Ord{\kappa \norm{A}_F \cdot \text{polylog}(n)/\epsilon}$.
In order to successfully perform the post-selection step, we need to repeat the algorithm on average $\kappa^2$ times. This additional multiplicative factor of $\kappa^2$ can be reduced to $\kappa$ using amplitude amplification~\cite{Brassard2002}.
Putting everything together, we have a final runtime of $\Ord{\kappa^2 \norm{A}_F \cdot \text{polylog}(n)/\epsilon}$.
\end{document}